\newtheorem{sanity}{Sanity check}
\def\..{\,\mathpunct{\ldotp\ldotp}} 
\newcommand{\Z}{\mathbf Z}
\newcommand{\lstt}[2]{${#1}_0$,~${#1}_1$, $\dots\,$,~${#1}_{#2-2}$}
\newcommand{\xorshift}[1][]{\texttt{xorshift#1}\xspace}
\newcommand{\xorshifts}[1][]{\texttt{xorshift#1*}\xspace}
\newcommand{\wella}{\texttt{WELL1024a}\xspace}
\newcommand{\wellb}{\texttt{WELL19937a}\xspace}
\newcommand{\mt}[1][]{\texttt{MT19937}\xspace}
\newcommand{\xorgens}[1][]{\texttt{xorgens#1}\xspace}
\newcommand{\xst}[3]{#1, #2, #3}
\newcommand{\xs}[4]{$A_{#1}(#2, #3, #4)$}
\begin{document}
\markboth{S.~Vigna}{An experimental exploration of Marsaglia's \xorshift
generators, scrambled}

\bibliographystyle{ACM-Reference-Format-Journals}

\title{An experimental exploration of Marsaglia's \xorshift generators, scrambled}
\author{Sebastiano Vigna
\affil{Universit\`a degli Studi di Milano, Italy}}

\begin{abstract}
Marsaglia proposed \xorshift generators as a class of very fast,
good-quality pseudorandom number generators. Subsequent analysis by 
Panneton and L'Ecuyer has lowered the expectations raised by Marsaglia's
paper, showing several weaknesses of such generators. Nonetheless, many of the weaknesses
of \xorshift generators fade away if their result is scrambled by a non-linear
operation (as originally suggested by Marsaglia). In this paper we explore the space of possible generators obtained 
by multiplying the result of a \xorshift generator by a suitable constant. We
sample generators at 100 points of their state space and obtain detailed statistics that lead
us to choices of parameters that improve on the current ones. 
We then explore for the first time the space of high-dimensional \xorshift generators, following
another suggestion in Marsaglia's paper, finding choices of parameters providing periods
of length $2^{1024}-1$ and $2^{4096}-1$. The resulting generators are of extremely high
quality, faster than current similar alternatives, and generate long-period sequences passing
strong statistical tests using only eight logical operations, one addition and one multiplication by a constant.
\end{abstract}

\category{G.3}{PROBABILITY AND STATISTICS}{Random number generation}
\category{G.3}{PROBABILITY AND STATISTICS}{Experimental design}

\terms{Algorithms, Experimentation, Measurement}

\keywords{Pseudorandom number generators}

\acmformat{Sebastiano Vigna, 2014. An experimental exploration of Marsaglia's \xorshift generators, scrambled.}

\maketitle

\begin{bottomstuff}
This work is supported the EU-FET grant NADINE (GA 288956).

This paper is an extended version of the paper with the same title published
in the ACM Transactions on Mathematical Software~\cite{VigEEMXGS}.

Author's addresses: Sebastiano Vigna, Dipartimento di Informatica,
Universit\`a degli Studi di Milano, via Comelico 39, 20135 Milano MI, Italy.
\end{bottomstuff}

\section{Introduction}

\xorshift generators are a simple class of pseudorandom number generators
introduced by~\citeN{MarXR}. In Marsaglia's view, their main feature is
speed: in particular, a \xorshift generator with a 64-bit state
generates a new 64-bit value using just three 64-bit shifts and three 64-bit xors (i.e., exclusive ors), thus
making it possible to generate hundreds of millions of values per
second.

Subsequent analysis by~\citeN{BreNMXRNG} showed that the bits generated by
\xorshift generators are equivalent to certain \emph{linear feedback shift registers}.
\citeN{PaLXRNG} analyzed in detail the theoretical properties of the generators,
and found empirical weaknesses using the TestU01 suite~\cite{LESTU01}. They
proposed an increase in the number of shifts, or combination with another generator, to improve quality.

In the first part of this paper, as warm-up we explore experimentally 
the space of \xorshift generators with 64 bits of state using statistical test suites. We sample
generators at 100 points of their state space, to easily identify spurious failures.
Marsaglia proposes some choice of parameters, that, as we
will see, and as already reported by~\citeN{PaLXRNG}, are not particularly good.
We report results that are actually worse than those of Panneton and L'Ecuyer as
we use the entire 64-bit output of the generators.
While we can suggest some good parameter choices, the result remains poor.

Thus, we turn to the idea of scrambling the result of a \xorshift generator
using a multiplication, as it is typical, for instance, in the construction of
practical hash functions due to the resulting \emph{avalanching} behavior (bits
of the result depend on several bits of the input). This method 
is actually suggested in passing in Marsaglia's paper.
The third edition of the classic ``Numerical Recipes''~\cite{PTVNR}, indeed,
proposes this construction for a basic, all-purpose generator.
From the wealth of data so obtained we derive generators with 
better statistical properties than those suggested in ``Numerical Recipes''.

In the last part of the paper, we follow the suggestion about high-dimensional
generators contained in Marsaglia's paper, and compute 
several choices of parameters that provide full-period \xorshift generators with a state of $1024$
and $4096$ bits. Once again, we propose generators that use a multiplication
to scramble the result.

At the end of the paper, we apply the same methodology to a number of popular
non-cryptographic generators, and we discover that our high-dimensional
generators are actually faster and of higher or equivalent statistical quality,
as assessed by statistical test suites, than the alternatives.

The software used to perform the experiments described in this paper is
distributed by the author under the GNU General Public License. Moreover,
all files generated during the experiments are available from the author. They contain a large
amount of data that could be further analyzed (e.g., by studying the
distribution of $p$-values over the seeds). We leave this issue open
for further work.

\begin{figure}
\centering
\begin{tabular}{l|c|l}
& C code & \\
\hline
$A_0$ & \verb.x ^= x << a; x ^= x >> b; x ^= x << c;.  & $\mathbf{X}_1$\\
$A_1$ & \verb.x ^= x >> a; x ^= x << b; x ^= x >> c;. & $\mathbf{X}_3$\\
$A_2$ & \verb.x ^= x << c; x ^= x >> b; x ^= x << a;. & $\mathbf{X}_2$\\
$A_3$ & \verb.x ^= x >> c; x ^= x << b; x ^= x >> a;. & $\mathbf{X}_4$\\
$A_4$ & \verb.x ^= x << a; x ^= x << c; x ^= x >> b;. & $\mathbf{X}_5$\\
$A_5$ & \verb.x ^= x >> a; x ^= x >> c; x ^= x << b;. & $\mathbf{X}_6$\\
$A_6$ & \verb.x ^= x >> b; x ^= x << a; x ^= x << c;. & $\mathbf{X}_7$\\
$A_7$ & \verb.x ^= x << b; x ^= x >> a; x ^= x >> c;. & $\mathbf{X}_8$
\end{tabular}
\caption{\label{tab:algo}The eight possible \xorshift[64] algorithms.  
The list is actually derived from~\protect\citeN{PaLXRNG},
as they correctly remarked that two of the eight algorithms proposed 
by Marsaglia were redundant, whereas two ($A_6$ and $A_7$) were missing.
On the right side we report the name of the linear transformation associated to
the algorithm as denoted by~\protect\citeN{PaLXRNG}. With our numbering,
algorithms $A_{2i}$ and $A_{2i+1}$ are conjugate by reversal. Note that contiguous shifts
in the same direction can be exchanged without affecting the resulting
algorithm. We normalized such contiguous shifts so that their letters are
lexicographically sorted.}
\end{figure}

\section{An introduction to \xorshift generators}

The basic idea of \xorshift generators is that their state is modified by
applying repeatedly a shift and an exclusive-or (xor) operation. In this paper we consider
64-bit shifts and states made of $2^n$ bits, with $n\geq 6$. We usually append
$n$ to the name of a family of generators when we need to restrict the discussion 
to a specific state size.

For \xorshift[64] generators Marsaglia suggests a number of possible combination of shifts,
shown in Figure~\ref{tab:algo}. Not all choices of parameters give a full ($2^{64}-1$)
period: there are $275$ suitable choices of $a$, $b$ and $c$ and eight variants,
totaling 2200 generators.
 
In linear-algebra terms, if $L$ is the $64\times 64$ matrix on $\Z/2\Z$ that effects a left shift of
one position on a binary row vector (i.e., $L$ is all zeroes except for ones on
the principal subdiagonal) and if $R$ is the right-shift
matrix (the transpose of $L$), each left/right shift and xor can be described as
a linear multiplication by $\bigl(I+L^s\bigr)$ or $\bigl(I+R^s\bigr)$, respectively, where $s$ is the
amount of shifting.\footnote{A more detailed study of the linear algebra behind \xorshift generators can be found in~\cite{MarXR,PaLXRNG}.} For instance, algorithm $A_0$ of Figure~\ref{tab:algo} is equivalent to the
$\Z/2\Z$-linear transformation
\[
\mathbf{X}_1 = \bigl(I+L^a\bigr)\bigl(I+R^b\bigr)\bigl(I+L^c\bigr). 
\]
It is useful to associate with a linear transformation $M$ its \emph{characteristic polynomial}
\[
P(x)=\operatorname{det}(M-xI).
\]
The associated generator has maximum-length period if and only if $P(x)$ is primitive over $\Z/2\Z$. 
This happens if $P(x)$ is irreducible and if $x$ has
maximum period in the ring of polynomial over $\Z/2\Z$ modulo $P(x)$, that is, if
the powers $x$,~$x^2$, $\dots\,$,~$x^{2^n-1}$ are distinct modulo $P(x)$. Finally,
to check the latter condition is sufficient to check that
\[x^{(2^n-1)/p}\neq 1\mod P(x)\]
for every prime $p$ dividing $2^n-1$~\cite{LiNIFFA}. 

The \emph{weight} of $P(x)$ is the number
of terms in $P(x)$, that is, the number of nonzero coefficients. It is considered a good property for generators
of this kind that the weight is close to $n/2$, that is, that the polynomial
is neither too sparse nor too dense~\cite{ComHCRBS}.

Note that the family of algorithms of Figure~\ref{tab:algo} is intended to
generate \emph{64-bit values}. This means that the entire output of the
algorithm should be used when performing tests. We will see that this has not
always been the case in previous literature.

\section{Setting up the experiments}
\label{sec:setup}

In this paper we want to explore experimentally the space of a number of
\xorshift-based generators. Our purpose is to identify variants
with full period which have particularly good statistical properties, and test
whether claims about good parameters made in the previous literature are
confirmed.

The basic idea is that of \emph{sampling} the generators by executing a battery
of tests starting with 100 different seeds that are equispaced in the state
space. More precisely, if the state is made of $n$ bits we
use the seeds $1+i\lfloor 2^{n}/100\rfloor$, $0\leq i<100$. The tests produce a
number of statistics, and we decided to use as \emph{score} the number of failed
tests. A higher score, thus, means lower quality.
Running multiple tests makes it easy to rule out spurious failures, as suggested
also by~\citeN{RSNSTSRPNGCA} in the context of cryptographic
applications.\footnote{We remark that, arguably, a more principled
choice would be choosing seeds that are equispaced \emph{in the sequence of states
traversed by the generator}. Unfortunately, this is possible only for
generators with ``jump-ahead'' primitives, and we want our methodology to be
universal. We checked that all sequences of states used in our tests
on generators with $64$ bits of state do not overlap. The chance that this
happens with more than $128$ bits of state is negligible.}

We use two tools to perform our tests. The first and most important is
TestU01, a test suite developed by~\citeN{LESTU01} that
contains several tests oriented towards the generation of uniform real numbers
in $[0\..1)$.\footnote{We use the double-dot notation for intervals introduced
by C.\,A.\,R.~Hoare and Lyle Ramshaw~\cite{GKPCM}.}
We also perform tests using Dieharder, a suite of tests developed by \citeN{BroD}, both as a sanity check and to compare the power of the two suites. 
Dieharder contains all original tests from Marsaglia's
Diehard, plus many more additional tests.
We refer frequently to the specific type of tests failed:
the reader can refer to the TestU01 and Dieharder documentation for more
information. 

We consider a test failed if its $p$-value is outside of the interval
$[0.001\..0.999]$. This is the interval outside which TestU01 reports a test by
default.
Sometimes a much stricter threshold is used (For instance,
\citeN{LESTU01} use $[10^{-10}\..1-10^{-10}]$ when applying TestU01 to 
a variety of generators), and weaker $p$-values are called \emph{suspicious
values}, but since we are going to repeat the
test $100$ times we can use relatively weak $p$-values: spurious failures will
appear rarely, and we can catch borderline cases (e.g., tests failing on 50\% of
the seeds) that give us useful information. 

We call \emph{systematic} a failure that happens for all seeds. For all such
failures in our tests, $p$-values are smaller than $10^{-15}$. Thus, all
conclusions drawn in this paper based on systematic failures would not change
even if we lowered significantly the failure threshold. More generally, $90$\%
of the $p$-values of failed tests are actually smaller than $10^{-6}$.

We remark that our choice (counting the number of failures) is somewhat rough;
for example, we consider the same failure a $p$-value very close to $0$ and a
$p$-value just below $0.001$. Indeed, other, more sophisticated methods might be
used to aggregate the result of our samples: combining $p$-values, for instance,
or computing a $p$-value of $p$-values~\cite{RSNSTSRPNGCA}.
However, our choice is very easy to interpret, and multiple samples partially
compensate this problem (spurious failures will appear in few samples).

Of course, the number of experiments is very large---in fact, our experiments
were carried out using hundreds of cores in parallel and, overall, they add up
to more than a century of computational time. Our strategy is to apply a very fast
test to all generators and seeds, in the hope of isolating a small group of
generators that behave significantly better with respect to these tests.
Stronger tests can then be applied to this subset. The same strategy has been followed by~\citeN{PanPhD} in the experimental study
of \xorshift generators contained in his Ph.D.~thesis.

TestU01 offers three different predefined batteries of tests (SmallCrush, Crush
and BigCrush) with increasing computational cost and increased difficulty.
Unfortunately, Dieharder does not provide such a segmentation.

Note that Dieharder has a concept of ``weak'' success and a concept of ``failure'',
depending on the $p$-value of the test, and we used command-line options to align
its behavior with that of TestU01: a $p$-value outside of the range
$[0.001\..0.999]$ is a failure. Moreover, we disabled the initial timing tests so
that exactly the same stream of 64-bit numbers is fed to the two test suites.

In both cases we implemented our own \xorshift generator. Some care is needed in
this phase, as both TestU01 and Dieharder are inherently 32-bit test suites:
since we want to test \xorshift as a \emph{64-bit} generator, it is important
that all bits produced are actually fed into the test. For this reason, we
implemented the generation of a uniform real value in $[0\..1)$ by dividing the
output of the generator by $2^{64}$, but we implemented the generation of
uniform 32-bit integer values by \emph{returning first the lower and then the upper
32 bits of each 64-bit generated value}.\footnote{If a real value is generated
when the upper 32 bits of the last value are available, they are simply
discarded.} A possible downside of this approach is 
that we might fail to detect some failure in the high bits (of the $64$-bit,
full output) due to the interleaving process: however, the fact that in our
tests \xorshift generators generate many more failures than those reported previously~\cite{PaLXRNG}
suggests that the approach is well founded.

An important consequence of this choice is that some of the bits are actually
not used at all. When analyzing pseudorandom real numbers in the unit interval,
there is an unavoidable bias towards high bits, as they are more significant.
The very lowest bits have lesser importance and will in any case be perturbed by
numerical errors. For this reason, it is a good practice to run 
tests both on a generator and on its reverse\footnote{That is, on
the generator obtained by reversing the order of the 64 bits returned.}~\cite{PTVNR}.
In our case, this is even more necessary, as the lowest eleven bits returned by the generator \emph{are
not used at all} due to the fact that the mantissa of a 64-bit floating-point
number is formed by 53 bits only.  

A recent example shows the importance of testing the reverse generator. 
\citeN{SaMXA} propose a different way to eliminate linear artifacts: instead of multiplying the
output of an underlying \xorshift generator (with $128$ bits of state and based on $32$-bit shifts) by a
constant, they add it (in $\Z/2^{32}\Z$) with the previous output. Since the sum
in $\Z/2^{32}\Z$ is not linear over $\Z/2\Z$, the result should be
free of linear artifacts.
However, while their generator passes BigCrush, its \emph{reverse} fails
systematically the LinearComp, MatrixRank, MaxOft and Permutation test of
BigCrush, which highlights a significant weakness in its lower bits.

We remark that in this paper we do not pursue the search for
\emph{equidistribution}---the property that all tuples of consecutive values,
seen as vectors in the unit cube, are evenly distributed, as done, for instance,
by~\citeN{PaLXRNG}.
\citeN{BreMEHDS} has already argued in detail that for long-period generators
equidistribution is not particularly desirable, as it is a property of the whole
sequence produced by the generator, and in the case of a long-period generator
only a minuscule fraction of the sequence can be actually used.
Moreover, equidistribution is currently impossible to evaluate exactly for
long-period non-linear generators, and in the formulation commonly used in the
literature it is known to be biased towards the high bits~\cite{LEPFRNG}: for
instance, the \wella generator has been designed to be \emph{maximally equidistributed}~\cite{PLMILPGBLRM2}, and indeed it has measure
of equidistribution $\Delta_1=0$, but the generator obtained by reversing its
bits has $\Delta_1=366$: a quite counterintuitive result, as in general we
expect all bits to be equally important.

Another problem with equidistribution is that it is intrinsically unstable,
unless we restrict its usage to the class of linear generators,
only.
Indeed, if we take a maximally equidistributed sequence, no matter how long,
and we flip the most significant bit of a single element of the sequence, the new sequence
will have the \emph{worst possible} $\Delta_1$.
For instance, by flipping the most significant bit of a single chosen value out
of the output of \wella we can turn its equidistribution measure to
$\Delta_1=4143$. But for any statistical or practical purpose the two sequences
are indistinguishable---we are modifying one bit out of $2^5(2^{1024}-1)$.
However, in general this paradoxical behaviour is not a big issue, because the
modified sequence can no longer be emitted by a linear generator.


We note that since multiplication by an invertible constant induces a
permutation of the space of $64$-bit values (and thus of $t$-tuples of such
values), it preserves some of the equidistribution properties of
the underlying generator (this is true of any bijective scrambling function);
more details will be given in the rest of the paper.

\section{Results for \xorshift[64] generators}
\label{sec:resxorshift}

First of all, \emph{all} generators fail at all seeds the
MatrixRank test from TestU01's SmallCrush suite.\footnote{\citeN{PaLXRNG} reports that \emph{half} of
the generators fail this test, but the authors have chosen to use only 32 of
the 64 generated bits as output bits, in practice applying a kind of \emph{decimation}
to the output of the generator.}
A score-rank plot\footnote{Score-rank plots
are the numerosity-based discrete analogous of the complementary cumulative
distribution function of scores.
They give a much clearer picture than frequency dot plots when the data points are scattered and highly variable.} of the SmallCrush scores for all generators is shown in
Figure~\ref{fig:xorshiftsmallscores}. The plot associates with abscissa $x$ the
number of generators with $x$ or more failures.
We observe immediately that there is a wide range of quality among the
generators examined.
The ``bumps'' in the plot corresponds to new tests failed systematically.

\begin{figure}
\centering
\includegraphics[scale=.6]{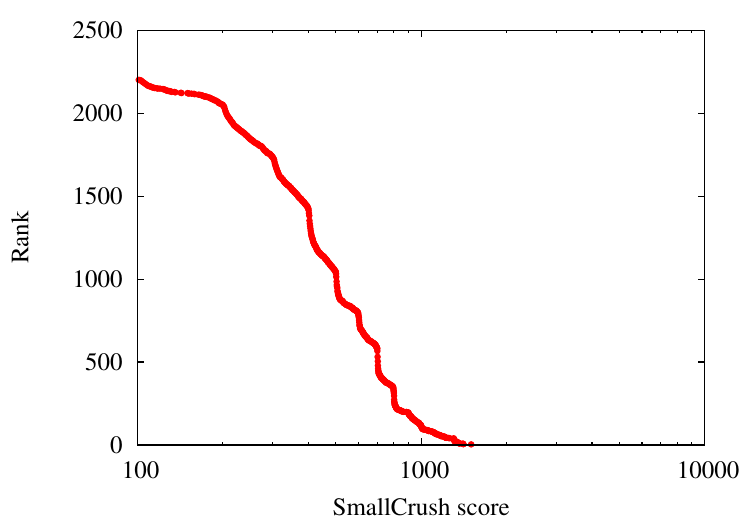}
\caption{\label{fig:xorshiftsmallscores}Score-rank plot of the distribution
of SmallCrush scores for the 2200 possible full-period \xorshift[64] generators.}
\end{figure}


A closer inspection would confirm that there is just a weak correlation
between scores of algorithms conjugate by reversal, because of the 
bias of TestU01 towards high bits. We thus
report in Table~\ref{tab:xorshift} reports the best four generators by combined
scores (i.e., adding the scores of conjugate generators), which are the only
ones failing systematically just the MatrixRank test.
The table reports also results for the generator $A_0(13,7,17)$ suggested by Marsaglia
in his original paper, claiming that it ``will provide an
excellent period $2^{64}-1$ RNG, [\ldots] but any of the above 2200 choices is
likely to do as well''. Clearly, this is not the case:
$A_0(13,7,17)$/$A_1(13,7,17)$ ranks 655 in the combined SmallCrush ranking and
fails systematically several tests. 



 
\begin{sanity} 
Is the result of our experiments dependent on our seed choice? To answer this
question, we repeated our experiments on \xorshift[64] generators with SmallCrush on a 
different set of seeds, namely the integers in the interval $[1\..100]$. 
Kendall's
$\tau$~\cite{KenNMRC,KenTTRP} between the two rankings is $0.98$, which makes it
clear that the dependence on the seed is negligible. In particular, the four best conjugate
pairs in Table~\ref{tab:xorshift} are the same with both seeds.
\end{sanity}

To gather more information, we ran the full BigCrush suite and
Dieharder on our four best generators, on Marsaglia's choice and on the best
choice from ``Numerical Recipes'': the results are given in
Tables~\ref{tab:xorshiftBigCrush} and~\ref{tab:xorshiftDieharder}. Even the four
best generators fail now systematically the BirthdaySpacings, MatrixRank and
LinearComp tests. The
first two generators, however, turn out to perform slightly better than other
two.
We also notice that BigCrush draws a much thicker line between our four best generators and the other ones, which now fail several more tests. Not surprisingly, Dieharder cannot really separate
our four best generators from $A_2(4,35,21)/A_3(4,35,21)$.


\begin{table}\tbl{\label{tab:xorshift}Best four \xorshift[64] generators following SmallCrush.}{%
\renewcommand{\arraystretch}{1.3}
\begin{tabular}{lr|lr|r|r}
Algorithm & Failures & Conjugate & Failures & Overall & $W$\\
\hline
$A_2(11,31,18)$ &     111  &   $A_3(11,31,18)$   &      120 &    231& 25\\
$A_2(8,29,19)$  &     155  &   $A_3(8,29,19)$    &   115   &  270&35\\
$A_0(8,29,19)$  &     159  &   $A_1(8,29,19)$   &   112   &  271&35\\
$A_0(11,31,18)$ &     130  &   $A_1(11,31,18)$  &    150  &   280&25\\
\hline
$A_0(13,7,17)$  &     276  &   $A_1(13,7,17)$   &    802  & 1078& 25  
\end{tabular}}
\end{table} 

\begin{table}\tbl{\label{tab:xorshiftBigCrush}The generators of Table~\protect\ref{tab:xorshift} tested with
BigCrush.}{%
\renewcommand{\arraystretch}{1.3}
\begin{tabular}{lr|lr|r}
Algorithm & Failures & Conjugate & Failures & Overall \\
\hline
$A_2(11,31,18)$ &     762  &   $A_3(11,31,18)$   &  750  &    1512 \\ 
$A_2(8,29,19)$  &  747  &   $A_3(8,29,19)$    &   780   &  1527\\ 
$A_0(8,29,19)$  &     749  &   $A_1(8,29,19)$   &   884   & 1633 \\
$A_0(11,31,18)$ &   748    &   $A_1(11,31,18)$  &  926   & 1674 \\
\hline
$A_2(4,35,21)$  &  961 	    &   $A_3(4,35,21)$   & 1444 & 2405 \\
\hline
$A_0(13,7,17)$  &  1049     &   $A_1(13,7,17)$   &   5454  & 6503 \\
\end{tabular}}
\end{table}

\begin{table}\tbl{\label{tab:xorshiftDieharder}The generators of Table~\protect\ref{tab:xorshift} tested 
with Dieharder.}{%
\renewcommand{\arraystretch}{1.3}
\begin{tabular}{lr|lr|r}
Algorithm & Failures & Conjugate & Failures & Overall\\
\hline
$A_2(11,31,18)$ & 182 & $A_3(11,31,18)$ & 162 & 344\\ 
$A_2(8,29,19)$  & 179 & $A_3(8,29,19)$  & 181 & 360\\
$A_0(8,29,19)$  & 176 & $A_1(8,29,19)$  & 182 & 358\\
$A_0(11,31,18)$ & 181 & $A_1(11,31,18)$ & 186 & 367\\
\hline
$A_2(4,35,21)$  & 189 & $A_3(4,35,21)$  & 187 & 376 \\
\hline
$A_0(13,7,17)$  & 183 & $A_1(13,7,17)$  & 1352 & 1535 \\
\end{tabular}}
\end{table} 

\subsection{Equidistribution}

It is interesting to compare the ranking provided
by equidistribution properties and that provided by statistical tests. 
Note that a \xorshift[64] generator is
$1$-dimensionally equidistributed, that is, every $64$-bit value appears exactly
once except for zero.
We refer to
the already quoted paper by~\citeN{PaLXRNG} for a detailed
description of the equidistribution statistics $\Delta_1$, the \emph{sum of
dimension gaps}: a lower value is better. A \emph{maximally distributed}
generator has $\Delta_1=0$, and we will refer to $\Delta_1$ as to the
\emph{equidistribution score}.
We computed the equidistribution score for all generators using the implementation of Harase's algorithm~\cite{HarELRM}
contained in the \texttt{MTToolBox} package from~\citeN{SaiM}.
Similarly to SmallCrush scores, $\Delta_1$ has 
high-bits bias, and a quite strong one~\cite{LEPFRNG}. 
For a fair comparison, we to thus combine the $\Delta_1$ score of a generator
and of its reverse.  



Figure~\ref{fig:xorshiftsmallequidistcomb} shows that there is some
correlation ($\tau=0.58$) between combined SmallCrush scores and combined equidistribution
scores. Nonetheless, even if equidistribution is able to detect reliably
generators with a very bad SmallCrush score, is not so good at detecting the
generators with the best score, as is visible
from the quite noisy lower left part of the plot.
Indeed, when we restrict our attention
to the best $30$ generators (by combined SmallCrush scores) Kendall's $\tau$ drops to $0.3$.
The first two
generators by combined equidistribution score, $A_4(8,29,19)$ and
$A_6(8,29,19)$, rank $20$ (combined score $361$) and $170$ (score $596$) in the
combined SmallCrush test. When analyzed with the more powerful lens of BigCrush,
they have combined scores $3441$ and $4082$, respectively, and fail
systematically almost \emph{twenty} additional tests with respect to the top 
four generators of Table~\ref{tab:xorshiftBigCrush}. Definitely, choosing
among \xorshift[64] generators by equidistribution score alone is not a good idea.

\begin{figure}
\centering
\includegraphics[scale=.6]{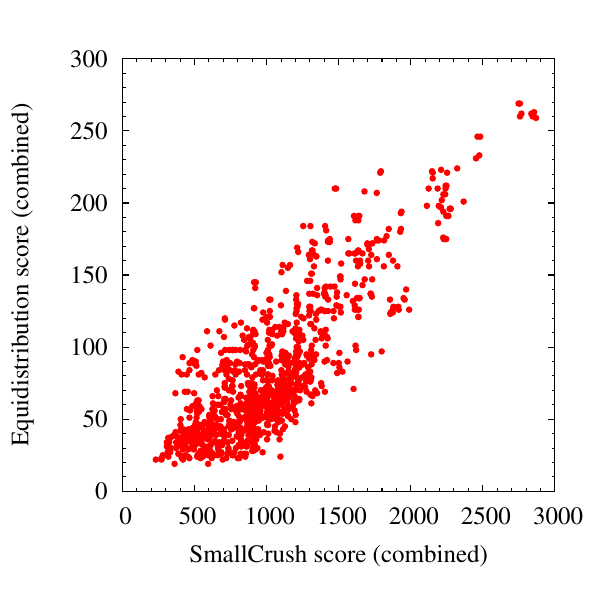}
\caption{\label{fig:xorshiftsmallequidistcomb}Scatter plot of the combined SmallCrush score of conjugate \xorshift[64] generators versus the combined equidistribution score.}
\end{figure}

%

\section{An introduction to \xorshifts[64] generators}

Since a \xorshift[64] generator exhibits evident linearity artifacts, the next
obvious step is to perturb its output using a nonlinear (in $\Z/2\Z$ sense) transformation. A natural candidate is multiplication by a
constant, also because such operation is very fast in modern
processors. Note that the current state of the generator is multiplied by a
constant before returning it, but the state itself is not affected by the
multiplication: thus, the period is the same.

We call such a generator \xorshifts. By choosing a constant invertible modulo
$2^{64}$ (i.e., odd), we can guarantee that the generator will output a
permutation of the sequence output by the underlying \xorshift generator.

This approach was noted in passing in Marsaglia's paper, and it is also proposed
in a more systematic way in the third edition of ``Numerical
Recipes''~\cite{PTVNR} to create a very fast, good-quality pseudorandom number
generator. However, in the latter case the authors \emph{first} compute
allegedly good triples for \xorshift using Diehard (with results markedly
different from ours, and in strident contrast with TestU01's results, as
discussed in Section~\ref{sec:resxorshift}) and \emph{then} choose a multiplier.
There is no reason why the best triples for a
\xorshift[64] generator (which are computed empirically) should continue to be
such in a \xorshifts[64] generator: and indeed, we will see that this is not the
case.

\begin{table}\tbl{\label{tab:mult}The three multipliers used in the rest of the paper. The subscripts recalls the $t$ for which they have good figures of merit.}{%
\centering
$M_{32}=2685821657736338717$ \quad $M_{8}=1181783497276652981$ \quad $M_{2}=
8372773778140471301$}
\end{table}

We thus repeated the experiments of the previous section on \xorshifts[64]
generators. To choose scrambling constants, we followed the heuristic
considerations of~\cite{PTVNR}. We consider primitive (e.g., full-period)
elements of the multiplicative group of $\Z/2^{64}\Z$: these elements have no
fixed point except for zero, which is a very desirable property for a scrambling function.
Moreover, we choose from~\citeN{LEcTLCGDSGLS}
primitive elements that have good qualities as \emph{multiplicative
congruential linear generators}, as we expect that multiplication by such
elements will combine bits in a non-trivial way. 

We use a standard theoretical measure of quality, the \emph{figure of
merit}, which is a normalized best distance between the hyperplanes of families
covering tuples of length $t$ given by successive outputs of the generators 
(see~\citeN{LEcTLCGDSGLS} for details). Since $t$ is an additional parameter,
to further understand the dependency on the multiplier we used three different
multipliers, shown in Table~\ref{tab:mult}, which have good figures of
merit for different $t$'s. The first multiplier, $M_{32}$ (the
one used in~\cite{PTVNR}) and the second, $M_8$, have been taken
from~\citeN{LEcTLCGDSGLS}. The third, $M_2$, was kindly provided by Richard
Simard.

We remark that many other choices for scrambling the output of a generator are
possible, like adding or xoring a fixed word, xoring the output with the output
of another generator, or using a bijective function with strong avalanching
behavior, such as those used in the construction of high-quality hash
functions.
The three factors we considered in our choice are: speed,
good results in statistical test suites, and preservation of 
some equidistribution properties (similarly to the approach taken
in~\cite{LEGCGCDF}).
For instance, xoring with an additive \emph{Weyl generator} (another suggestion
in Marsaglia's paper) makes it in general impossible to prove any
equidistribution property---not even that all $64$-bit value except for zero are output by the
generator. Multiplication by a constant is a very fast operation in modern
processor, and mixing linear operations on $\Z/2\Z$ with operations in
the ring $\Z/2^{64}\Z$ is a standard technique to avoid visible artifacts from
either type of algebraic structure. A drawback is that the lowest bit is, in
fact, not scrambled, and thus it is identical to the lowest bit of
the underlying \xorshift generator.\footnote{As remarked by one of the referees,
since our multipliers are all equal to 1 modulo 4, this is true also of the
second-lowest bit.}

\section{Results for \xorshifts[64] generators}
\label{sec:resxorshifts}

The scatter plot in Figure~\ref{fig:xorshiftlcgcombscatter} shows that there is
essentially no correlation between the scores assigned by SmallCrush to a
generator and its reverse ($\tau=0.15$).\footnote{We report
plots only for $M_{32}$, as the ones for the other multipliers are visually identical.}
Another interesting observation on Figure~\ref{fig:xorshiftlcgcombscatter} is
that the lower right half is essentially empty. So bad generators have a bad
reverse, but there are good generators with a very bad reverse. This suggests
that the quality of a \xorshifts[64] generator can vary wildly from the low to the high bits.

\begin{figure}
\centering
\includegraphics[scale=.6]{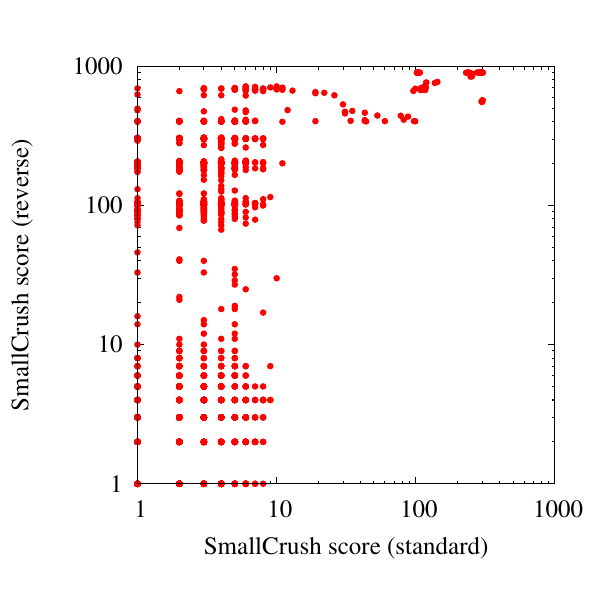}
\caption{\label{fig:xorshiftlcgcombscatter}Scatter plot of the SmallCrush score
of \xorshifts[64] generators and their reverse.}
\end{figure}

A score-rank plot of the SmallCrush scores for all generators shown in
Figure~\ref{fig:xorshiftlcgsmallscores} provides us with further interesting
information: almost all generators have no systematic failure, but only about
half of the reverse generators have no systematic failure. Moreover, the
distribution of standard generators degrades smoothly, whereas the distribution
of reverse generators sports again the ``bump'' phenomenon we observed in
Figure~\ref{fig:xorshiftsmallscores}.

\begin{figure}
\centering
\includegraphics[scale=.6]{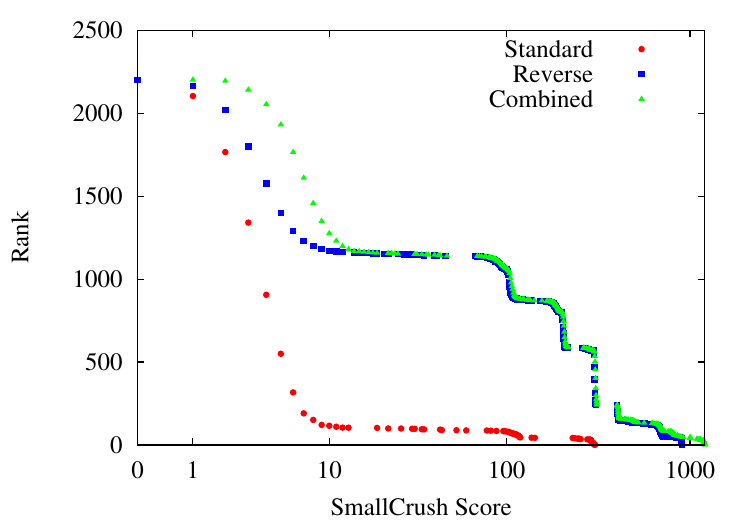}
\caption{\label{fig:xorshiftlcgsmallscores}Score-rank plot of the distribution of SmallCrush scores for the 2200
possible \xorshifts[64] generators with multiplier $M_{32}$.}
\end{figure}

%

Since we need to reduce the number of candidates to apply stronger tests, in the case of $M_{32}$ we
decided to restrict our choice to generators with 3 overall failed tests or
less, which left us with 152 generators. Similar cutoff points were chosen for
$M_8$ and $M_2$.

These generators were few enough so that we could apply both Crush and
Dieharder. Once again, we examine the correlation between the score of a generator
and its reverse by means of the scatter plots in Figure~\ref{fig:xorshiftlcgdh01scatter},
which confirm the high-bits bias, albeit less so in the Dieharder case.

In Figure~\ref{fig:xorshiftlcgdh01} we compare instead the two scores (Crush and
Dieharder) available. The most remarkable feature is there are no points in the upper left corner:
there is no generator that is considered good by Crush but not by Dieharder. On
the contrary, Crush heavily penalizes (in particular because of the score on the
reverse generator) a large number of generators. The generators we will select
in the end all belong to the small cloud in the lower left corner, where the two
test suite agree.

\begin{figure}
\centering
\includegraphics[scale=.6]{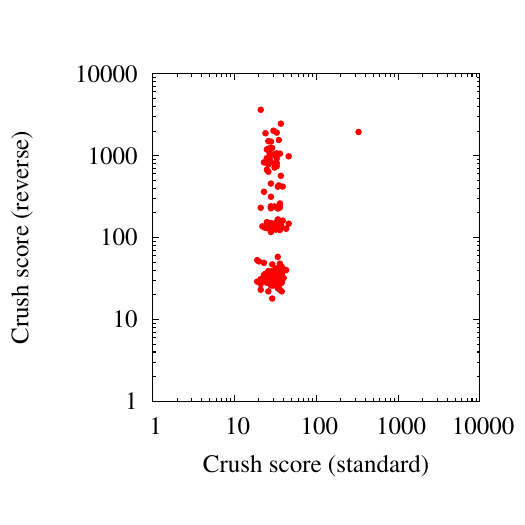}\quad\includegraphics[scale=.6]{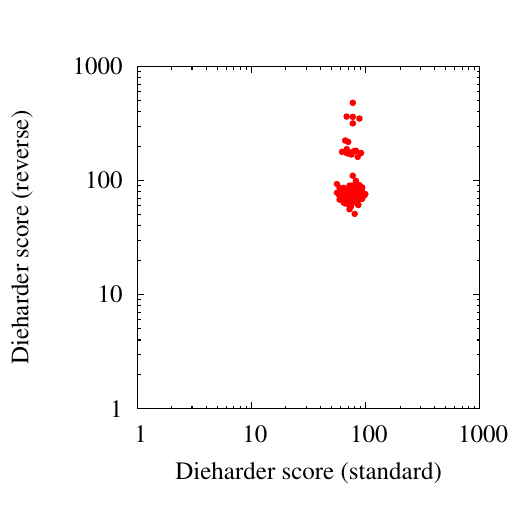}
\caption{\label{fig:xorshiftlcgdh01scatter}Scatter plots for Crush (left)
and Dieharder (right) scores on \xorshifts[64] generators with multiplier $M_{32}$ and their reverse, for the 152 best
generators.}
\end{figure}

\begin{figure}
\centering
\includegraphics[scale=.6]{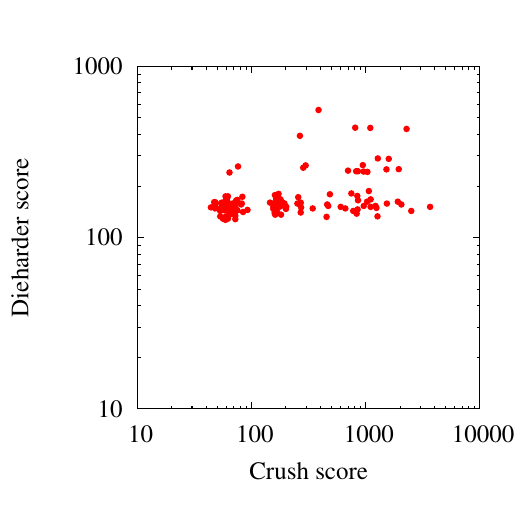}
\caption{\label{fig:xorshiftlcgdh01}A scatter plot of Crush and Diehard
combined scores of the 152 SmallCrush-best \xorshifts[64] generators. The plot is in log-log scale
to accommodate some very high values returned by Crush on reverse
generators. The lower-left ``sweet spot'' corner contains generators
that never fail systematically (not even reversed) in both test suites.}
\end{figure}

\begin{figure}
\centering
\includegraphics[scale=.6]{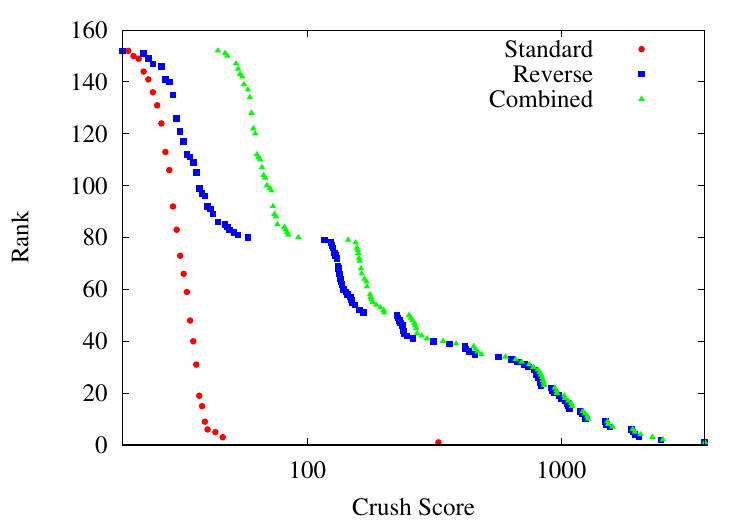}
\caption{\label{fig:xorshiftlcgmediumscores}
Score-rank plot of the distribution of Crush scores for the 
152 SmallCrush-best \xorshifts[64] generators using multiplier $M_{32}$.}
\end{figure}


The score-rank plot in Figure~\ref{fig:xorshiftlcgmediumscores} shows that our strategy
pays off: we started with 152 generators with less than three failures, but
analyzing them with the more powerful lens provided by Crush we get a much
more fine-grained analysis: in particular, only 73 of them give no systematic
failure, and they all belong to the ``sweet spot'' of
Figure~\ref{fig:xorshiftlcgdh01}, that is, they do not give any systematic
failure in Dieharder, too.

Finally, we selected for each multiplier the eight generators with the best
Crush scores, and applied the BigCrush suite: we obtained several generators
failing systematically the MatrixRank test only and shown in Table~\ref{tab:BigCrush64} (which
should be compared with Table~\ref{tab:xorshiftBigCrush}).

\begin{table}\tbl{\label{tab:BigCrush64}Results of BigCrush on the best eight \xorshifts[64] generators found by SmallCrush and Crush in sequence. The generators fail systematically
only MatrixRank.}{%
\renewcommand{\arraystretch}{1.3}
\begin{tabular}{l|rr|r|r}
\multirow{2}{*}{Algorithm} & \multicolumn{3}{c|}{Failures}  & \multirow{2}{*}{$W$}\\
& \multicolumn{1}{c}{S} & \multicolumn{1}{c|}{R} & \multicolumn{1}{c|}{+} \\
\hline
\multicolumn{5}{c}{$M_{32}$}\\
\hline
\xs{7}{11}{5}{45} & 226 & 128 & 354 & 23\\
\xs{7}{17}{23}{52} & 232 & 130 & 362 & 25\\
\xs{1}{12}{25}{27} & 230 & 133 & 363 & 31\\
\xs{1}{17}{23}{29} & 229 & 137 & 366 & 21\\
\xs{5}{14}{23}{33} & 238 & 132 & 370 & 32\\
\xs{5}{17}{47}{29} & 231 & 141 & 372 & 24\\
\xs{1}{16}{25}{43} & 238 & 138 & 376 & 31\\
\xs{7}{23}{9}{57} & 242 & 134 & 376 & 19\\
\hline
\multicolumn{5}{c}{$M_8$}\\
\hline
\xs{5}{11}{5}{32} & 229 & 122 & 351 & 13\\
\xs{2}{8}{31}{17} & 229 & 126 & 355 & 21\\
\xs{5}{3}{21}{31} & 230 & 141 & 371 & 33\\
\xs{3}{17}{45}{22} & 241 & 133 & 374 & 27\\
\xs{4}{8}{37}{21} & 239 & 136 & 375 & 33\\
\xs{3}{13}{47}{23} & 232 & 144 & 376 & 27\\
\xs{3}{13}{35}{30} & 244 & 136 & 380 & 27\\
\xs{4}{9}{37}{31} & 243 & 141 & 384 & 27\\
\hline
\multicolumn{5}{c}{$M_2$}\\
\hline
\xs{7}{13}{19}{28} & 228 & 128 & 356 & 23\\
\xs{3}{9}{21}{40} & 228 & 132 & 360 & 35\\
\xs{1}{14}{23}{33} & 234 & 142 & 376 & 29\\
\xs{7}{19}{43}{27} & 239 & 137 & 376 & 23\\
\xs{1}{17}{47}{28} & 240 & 137 & 377 & 25\\
\xs{5}{16}{11}{27} & 234 & 144 & 378 & 25\\
\xs{4}{4}{35}{15} & 230 & 149 & 379 & 35\\
\xs{7}{13}{21}{18} & 238 & 144 & 382 & 31\\
\end{tabular}}
\end{table}

\subsection{Equidistribution}
\label{sec:equidistlcg}

Multiplication by an invertible element just permutes the elements of
$\Z/2^{64}\Z$ leaving zero fixed, so a \xorshifts[64] generator, like the
underlying \xorshift[64] generator, is $1$-dimensionally
equidistributed.


\section{High dimension}
\label{sec:high}

\citeN{MarXR} describes a strategy for \xorshift generators in high dimension: the idea
is to use always three low-dimensional shifts, but locating them in the context
of a larger $t\times t$ block matrix of the form
\[
M=\left(\begin{matrix}
0 & 0 & 0 & \cdots & 0 & ( I + L^a ) ( I + R^b )\\
I & 0 & 0 & \cdots & 0 & 0\\
0 & I & 0 & \cdots & 0 & 0\\
0 & 0 & I & \cdots & 0 & 0\\
\cdots&\cdots&\cdots&\cdots&\cdots&\cdots\\
0 & 0 & 0 & \cdots & I & ( I + R^c ) \\
\end{matrix}\right)
\]
Marsaglia notes that even in this restricted form there are matrices of full period (he provides examples
for 32-bit shifts up to 160 bits). However, 
this route has not been explored for high-dimensional (say, more than 1024 bits
of state) generators. The only similar approach is that proposed by~\citeN{BreSLPRNGUSX} with his \xorgens generators, which however uses
four shifts. The obvious question is thus: is the additional shift
really necessary to pass a strong statistical test such as BigCrush? We are thus
going to look for good, full-period generators with $1024$ or $4096$ bits of state using 64-bit basic shifts.\footnote{The reason why the number 4096 is
relevant here is that we know the factorization of Fermat's numbers $2^{2^k}+1$
only up to $k=11$. When more Fermat numbers will be factorized, it will be
possible to design \xorshift or \xorgens generators with larger state
space~\cite{BreSLPRNGUSX}. Note that, however, in practice a period of
$2^{1024}-1$ is more than sufficient for any purpose. For example, even if
$2^{100}$ computers were to generate sequences of $2^{100}$ numbers starting
from random seeds using a generator with period $2^{1024}$, the chances that two
sequences overlap would be less than $2^{-724}$.} 

The output of such generators will be given by the last $64$ bits of
the state. It is well known~\cite{BreNMXRNG,NieRNGQMCM} that every bit of
state satisfies a linear recurrence (defined by the characteristic
polynomial) with full period, so \textit{a fortiori} the last $64$ bits
have full period, too. 

Since we already know that
some deficiencies of low-dimensional \xorshift generators are well corrected by
multiplication by a constant, we will follow the same approach, thus looking for
good \xorshifts generators of high dimension.\footnote{As in the \xorshift[64] case, different choices for the shifts
are possible. We will not pursue them here.} Note that since
multiplication by an integer invertible in $\Z/2^{64}\Z$ is a permutation of
$\Z/2^{64}\Z$, a high-dimension \xorshifts generator has the same
period of the underlying \xorshift generator.

We cannot in principle claim full period if we look at a \emph{single}
bit of the output of a \xorshifts generator; but this property can be easily
proved by purely combinatorial means:

\begin{proposition}
\label{prop:bit}
Let \lstt{\bm x}{2^n} be a list of $2^t$-bit values, $t\leq n$, such that every
value appears $2^{n-t}$ times, except for $0$, which appears $2^{n-t}-1$ times.
Then, for every fixed bit $k$ the associated sequence has
period $2^n-1$.
\end{proposition}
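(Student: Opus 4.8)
Fix a bit position $k$. The object under study is the purely periodic $0/1$ sequence obtained by reading bit $k$ off the list $\bm x_0,\dots,\bm x_{2^n-2}$ and repeating it, so $N:=2^n-1$ is one of its periods and hence its minimal period $d$ divides $N$; the goal is to show $d=N$. The plan is a weight count in the spirit of the classical fact that a binary $m$-sequence of period $2^n-1$ has exactly $2^{n-1}$ ones: I would count how many $1$s bit $k$ contributes over one full period, observe that this number is a power of two, and then use that $N=2^n-1$ is \emph{odd} to conclude that one full period cannot consist of more than a single copy of the length-$d$ block.

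For the count: among the values that can occur in the list, exactly half have bit $k$ equal to $1$, so their number $s$ is a power of two, and $0$ is not one of them. Since $t\le n$, each such value actually occurs, and it does so exactly $2^{n-t}$ times (the missing occurrence applies only to $0$). Hence bit $k$ of $\bm x_i$ equals $1$ for exactly $s\cdot 2^{n-t}$ indices $i$ with $0\le i<N$, again a power of two; call it $2^a$. It lies strictly between $0$ and $N$, so, in passing, the bit sequence is nonconstant, as expected.

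Now write $N=md$ with $m=N/d$ a positive integer. Because the sequence has period $d$, one full period of length $N$ is $m$ verbatim copies of its length-$d$ block, so $2^a=m\cdot(\text{number of }1\text{s in one block})$; in particular $m\mid 2^a$, so $m$ is a power of two. But $m\mid N=2^n-1$, which is odd, so $m$ is odd, and the only odd power of two is $1$. Therefore $m=1$ and $d=N=2^n-1$.

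I do not expect a real obstacle: the argument is elementary. The point that must not be missed is \emph{why} the hypothesis that $0$ appears one time fewer than the other values is needed: it is exactly this deficit that makes the period length $N=2^n-1$ odd, and oddness is what forbids a proper sub-period (if all admissible values occurred equally often the length would be $2^n$, and the count above would not rule out a period of $2^{n-1}$). The remaining ingredients---that the minimal period divides every period and the block count, and that a prescribed bit is set in exactly half of the admissible values---are routine.
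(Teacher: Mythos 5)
Your proof is correct and takes essentially the same route as the paper's: count the ones of bit $k$ over a full period (exactly $2^{t-1}\cdot 2^{n-t}=2^{n-1}$, since zero has bit $k$ clear and only its multiplicity is deficient), observe that the number of repeated blocks must divide this power of two while also dividing the odd number $2^n-1$, and conclude it equals $1$. The only difference is cosmetic: you argue via the minimal period $d$ and $m=N/d$, whereas the paper assumes a period $p\mid 2^n-1$ and shows $(2^n-1)/p=1$.
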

\begin{proof}
Suppose that there is a $k$ and a $p\mid 2^n-1$ such that the $k$-th
bit of \lstt{\bm x}{2^n} has period $p$ (that is, the sequence of bits
associated with the $k$-th bit is made by $(2^n-1)/p$ repetitions of
the same sequence of $p$ bits). The $k$-th bit runs through $2^{n-1}-1$ zeroes
and $2^{n-1}$ ones (as there is a missing zero in the output sequence).
This means that $(2^n-1)/p\mid 2^{n-1}$, too, as the same number of ones must appear in every repeating subsequence, and
since $(2^n-1)/p$ is odd this implies $p=2^n-1$.
\end{proof}

\begin{corollary}
Every bit of the output of a full-period \xorshifts generator
has full period.
\end{corollary}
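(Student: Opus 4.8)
The plan is to deduce the corollary directly from Proposition~\ref{prop:bit} by checking that, over one period, the output of a full-period \xorshifts generator is a list of exactly the kind described there. Write $s$ for the number of state bits of the underlying \xorshift generator (so $s\in\{1024,4096\}$ in the cases at hand, and in any case $s>64$), recall that its output before scrambling is the word $w(x)$ formed by the last $64$ bits of the state $x$, and that the \xorshifts generator returns $M\cdot w(x)$ for a fixed odd multiplier $M$. What must be established is how often each $64$-bit value occurs as an output over one full period.

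First I would invoke the full-period hypothesis on the underlying \xorshift generator: over one period its state $x$ runs exactly once through each of the $2^s-1$ nonzero vectors of $(\Z/2\Z)^s$. Next, note that $w\colon(\Z/2\Z)^s\to(\Z/2\Z)^{64}$ is $\Z/2\Z$-linear and surjective with kernel of dimension $s-64$, hence every one of its $2^{64}$ fibers has exactly $2^{s-64}$ elements. So, ranging over \emph{all} $2^s$ states, each $64$-bit value occurs $2^{s-64}$ times as $w(x)$; removing the single absent state $x=0$ deletes one occurrence of the value $0=w(0)$ and nothing else. Thus over one period every $64$-bit value occurs $2^{s-64}$ times as $w(x)$, except $0$, which occurs $2^{s-64}-1$ times. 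Finally, since $M$ is odd it is invertible modulo $2^{64}$, so $y\mapsto M\cdot y$ is a bijection of $(\Z/2\Z)^{64}$ fixing $0$; post-composing with it merely permutes the multiset of outputs and leaves these multiplicities unchanged. Hence the length-$(2^s-1)$ output list satisfies the hypotheses of Proposition~\ref{prop:bit} with its parameters $t=64$ and $n=s$, and the proposition yields that every fixed output bit has period $2^n-1=2^s-1$, which is the full period of the \xorshifts generator.

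I do not expect a genuine obstacle: the argument is short bookkeeping on multisets, and the only external ingredient is the full-period property of the underlying \xorshift generator, which is given (it follows, as discussed earlier, from primitivity of the associated characteristic polynomial). The single point worth a careful line is the perfect uniformity of the fibers of the truncation map $w$ together with the facts that scrambling fixes $0$ and is a bijection; the first is immediate from $w$ being linear and surjective, the second from $M$ being invertible modulo $2^{64}$.
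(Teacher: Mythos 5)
Your proof is correct and follows exactly the route the paper intends: the corollary is stated as an immediate consequence of Proposition~\ref{prop:bit}, and the only content to supply is that over one period each $64$-bit output value occurs $2^{s-64}$ times except zero, which occurs once less --- which you establish, as the paper implicitly does via its equidistribution remarks, from the uniformity of the fibers of the linear truncation map and the fact that multiplication by an odd constant is a bijection fixing zero.
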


\subsection{Finding good shifts}

The first step is identifying values of $a$, $b$ and $c$ for which the
generator has maximum period using the primitivity check on the characteristic polynomial.
We performed these computations using the algebra package Fermat~\cite{LewF},
with the restriction that $a+b\leq 64$ and that $a$ is coprime with $b$
(see~\cite{BreSLPRNGUSX} for the rationale behind this choices, which
significantly reduce the search space). The resulting sets of values are those
shown in Table~\ref{tab:BigCrush1024star} and~\ref{tab:BigCrush4096}.

For a state of 1024 bits, we obtain 20 possible parameter choices, which we examined in combination with our
three multipliers both through BigCrush and through Dieharder. The results, reported in 
Table~\ref{tab:BigCrush1024star} and~\ref{tab:Dieharder1024}, are excellent: 
with the exception of two pathological choices, no test is failed 
systematically. For a state of 4096 bits (Table~\ref{tab:BigCrush4096}
and~\ref{tab:Dieharder4096}) there are 10 possible parameter choices, 
and no generator fails a test systematically.


\subsection{Equidistribution}

Looking at the shape of the matrix defining high-dimensional \xorshift
generators it is clear that if the state is made of $n$ bits the last
$n/64$ output values, concatenated, are equal to the current state. This implies
that such generators are $n/64$-dimensionally equidistributed (i.e.,
every $n/64$-tuple of consecutive $64$-bit values appears exactly once, except
for a missing tuple of zeroes), so \xorshift[1024] generators are $16$-dimensionally
equidistributed and \xorshift[4096] generators are $64$-dimensionally
equidistributed. Since multiplication by a constant just permutes the space of
tuples, the same is true of the associated \xorshifts generators.

\begin{sidewaystable}\tbl{\label{tab:BigCrush1024star}Results of BigCrush on the \xorshifts[1024] generators. The last two generators fail systematically 
CouponCollector, Gap, HammingIndep, MatrixRank, SumCollector and WeightDistrib.}{%
\renewcommand{\arraystretch}{1.3}
\begin{tabular}{c|c|c}
$M_{32}$&$M_8$&$M_2$\\
\hline
\begin{tabular}{l|rr|r|r}
\multirow{2}{*}{$a$, $b$, $c$} & \multicolumn{3}{c|}{Failures}  & \multirow{2}{*}{$W$}\\
& \multicolumn{1}{c}{S} & \multicolumn{1}{c|}{R} & \multicolumn{1}{c|}{+} \\
\hline
\xst{27}{13}{46} & 25 & 31 & 56 & 275\\
\xst{31}{33}{37} & 28 & 32 & 60 & 79\\
\xst{22}{7}{48} & 37 & 24 & 61 & 223\\
\xst{7}{16}{55} & 37 & 26 & 63 & 65\\
\xst{9}{14}{41} & 23 & 40 & 63 & 167\\
\xst{41}{7}{29} & 28 & 37 & 65 & 265\\
\xst{1}{13}{7} & 34 & 34 & 68 & 113\\
\xst{10}{11}{61} & 32 & 36 & 68 & 155\\
\xst{9}{5}{60} & 44 & 28 & 72 & 227\\
\xst{16}{23}{30} & 37 & 36 & 73 & 59\\
\xst{3}{26}{35} & 45 & 29 & 74 & 89\\
\xst{25}{8}{15} & 42 & 34 & 76 & 281\\
\xst{31}{11}{30} & 35 & 43 & 78 & 363\\
\xst{40}{11}{31} & 38 & 40 & 78 & 77\\
\xst{31}{10}{27} & 34 & 45 & 79 & 233\\
\xst{2}{11}{61} & 43 & 40 & 83 & 81\\
\xst{15}{16}{19} & 45 & 39 & 84 & 255\\
\xst{10}{9}{63} & 39 & 51 & 90 & 69\\
\xst{51}{1}{46} & 31 & 890 & 921 & 111\\
\xst{47}{1}{41} & 50 & 902 & 952 & 99\\
\end{tabular}&
\begin{tabular}{l|rr|r|r}
\multirow{2}{*}{$a$, $b$, $c$} & \multicolumn{3}{c|}{Failures}  & \multirow{2}{*}{$W$}\\
& \multicolumn{1}{c}{S} & \multicolumn{1}{c|}{R} & \multicolumn{1}{c|}{+} \\
\hline
\xst{1}{13}{7} & 28 & 19 & 47 & 113\\
\xst{3}{26}{35} & 29 & 22 & 51 & 89\\
\xst{40}{11}{31} & 24 & 33 & 57 & 77\\
\xst{15}{16}{19} & 30 & 32 & 62 & 255\\
\xst{22}{7}{48} & 29 & 33 & 62 & 223\\
\xst{9}{14}{41} & 32 & 30 & 62 & 167\\
\xst{41}{7}{29} & 25 & 38 & 63 & 265\\
\xst{31}{11}{30} & 33 & 32 & 65 & 363\\
\xst{2}{11}{61} & 25 & 41 & 66 & 81\\
\xst{10}{11}{61} & 42 & 25 & 67 & 155\\
\xst{7}{16}{55} & 32 & 35 & 67 & 65\\
\xst{16}{23}{30} & 35 & 34 & 69 & 59\\
\xst{25}{8}{15} & 25 & 45 & 70 & 281\\
\xst{27}{13}{46} & 39 & 32 & 71 & 275\\
\xst{31}{10}{27} & 40 & 32 & 72 & 233\\
\xst{9}{5}{60} & 40 & 36 & 76 & 227\\
\xst{31}{33}{37} & 39 & 39 & 78 & 79\\
\xst{10}{9}{63} & 31 & 49 & 80 & 69\\
\xst{51}{1}{46} & 60 & 896 & 956 & 111\\
\xst{47}{1}{41} & 67 & 907 & 974 & 99\\
\end{tabular}&
\begin{tabular}{l|rr|r|r}
\multirow{2}{*}{$a$, $b$, $c$} & \multicolumn{3}{c|}{Failures}  & \multirow{2}{*}{$W$}\\
& \multicolumn{1}{c}{S} & \multicolumn{1}{c|}{R} & \multicolumn{1}{c|}{+} \\
\hline
\xst{3}{26}{35} & 29 & 24 & 53 & 89\\
\xst{27}{13}{46} & 41 & 20 & 61 & 275\\
\xst{25}{8}{15} & 38 & 24 & 62 & 281\\
\xst{31}{10}{27} & 36 & 31 & 67 & 233\\
\xst{9}{5}{60} & 24 & 43 & 67 & 227\\
\xst{1}{13}{7} & 28 & 42 & 70 & 113\\
\xst{15}{16}{19} & 36 & 34 & 70 & 255\\
\xst{2}{11}{61} & 40 & 30 & 70 & 81\\
\xst{41}{7}{29} & 36 & 34 & 70 & 265\\
\xst{9}{14}{41} & 33 & 37 & 70 & 167\\
\xst{22}{7}{48} & 37 & 35 & 72 & 223\\
\xst{31}{11}{30} & 45 & 27 & 72 & 363\\
\xst{7}{16}{55} & 36 & 39 & 75 & 65\\
\xst{31}{33}{37} & 37 & 39 & 76 & 79\\
\xst{10}{11}{61} & 41 & 37 & 78 & 155\\
\xst{16}{23}{30} & 44 & 37 & 81 & 59\\
\xst{40}{11}{31} & 38 & 48 & 86 & 77\\
\xst{10}{9}{63} & 48 & 48 & 96 & 69\\
\xst{51}{1}{46} & 31 & 799 & 830 & 111\\
\xst{47}{1}{41} & 47 & 799 & 846 & 99\\
\end{tabular}
\end{tabular}}
\end{sidewaystable}

\begin{sidewaystable}\tbl{\label{tab:Dieharder1024}Results of Dieharder on \xorshifts[1024] generators. No test is failed systematically.}{%
\renewcommand{\arraystretch}{1.3}
\begin{tabular}{c|c|c}
$M_{32}$&$M_8$&$M_2$\\
\hline
\begin{tabular}{l|rr|r|r}
\multirow{2}{*}{$a$, $b$, $c$} & \multicolumn{3}{c|}{Failures}  & \multirow{2}{*}{$W$}\\
& \multicolumn{1}{c}{S} & \multicolumn{1}{c|}{R} & \multicolumn{1}{c|}{+} \\
\hline
\xst{31}{33}{37} & 57 & 67 & 124 & 79\\
\xst{31}{11}{30} & 65 & 61 & 126 & 363\\
\xst{16}{23}{30} & 74 & 56 & 130 & 59\\
\xst{41}{7}{29} & 71 & 61 & 132 & 265\\
\xst{9}{14}{41} & 74 & 64 & 138 & 167\\
\xst{10}{9}{63} & 74 & 66 & 140 & 69\\
\xst{22}{7}{48} & 66 & 75 & 141 & 223\\
\xst{51}{1}{46} & 78 & 63 & 141 & 111\\
\xst{27}{13}{46} & 63 & 79 & 142 & 275\\
\xst{25}{8}{15} & 80 & 64 & 144 & 281\\
\xst{3}{26}{35} & 81 & 66 & 147 & 89\\
\xst{2}{11}{61} & 79 & 71 & 150 & 81\\
\xst{40}{11}{31} & 74 & 76 & 150 & 77\\
\xst{31}{10}{27} & 82 & 71 & 153 & 233\\
\xst{47}{1}{41} & 74 & 79 & 153 & 99\\
\xst{9}{5}{60} & 81 & 75 & 156 & 227\\
\xst{10}{11}{61} & 75 & 84 & 159 & 155\\
\xst{15}{16}{19} & 72 & 88 & 160 & 255\\
\xst{7}{16}{55} & 94 & 68 & 162 & 65\\
\xst{1}{13}{7} & 87 & 76 & 163 & 113\\
\end{tabular}&
\begin{tabular}{l|rr|r|r}
\multirow{2}{*}{$a$, $b$, $c$} & \multicolumn{3}{c|}{Failures}  & \multirow{2}{*}{$W$}\\
& \multicolumn{1}{c}{S} & \multicolumn{1}{c|}{R} & \multicolumn{1}{c|}{+} \\
\hline
\xst{25}{8}{15} & 67 & 56 & 123 & 281\\
\xst{16}{23}{30} & 77 & 54 & 131 & 59\\
\xst{7}{16}{55} & 66 & 66 & 132 & 65\\
\xst{3}{26}{35} & 60 & 75 & 135 & 89\\
\xst{10}{11}{61} & 63 & 74 & 137 & 155\\
\xst{31}{10}{27} & 74 & 69 & 143 & 233\\
\xst{31}{33}{37} & 86 & 58 & 144 & 79\\
\xst{47}{1}{41} & 82 & 62 & 144 & 99\\
\xst{27}{13}{46} & 78 & 69 & 147 & 275\\
\xst{31}{11}{30} & 85 & 62 & 147 & 363\\
\xst{10}{9}{63} & 65 & 86 & 151 & 69\\
\xst{41}{7}{29} & 84 & 68 & 152 & 265\\
\xst{2}{11}{61} & 88 & 65 & 153 & 81\\
\xst{9}{14}{41} & 77 & 80 & 157 & 167\\
\xst{40}{11}{31} & 82 & 78 & 160 & 77\\
\xst{15}{16}{19} & 85 & 76 & 161 & 255\\
\xst{51}{1}{46} & 92 & 74 & 166 & 111\\
\xst{22}{7}{48} & 90 & 82 & 172 & 223\\
\xst{1}{13}{7} & 79 & 95 & 174 & 113\\
\xst{9}{5}{60} & 97 & 89 & 186 & 227\\
\end{tabular}&
\begin{tabular}{l|rr|r|r}
\multirow{2}{*}{$a$, $b$, $c$} & \multicolumn{3}{c|}{Failures}  & \multirow{2}{*}{$W$}\\
& \multicolumn{1}{c}{S} & \multicolumn{1}{c|}{R} & \multicolumn{1}{c|}{+} \\
\hline
\xst{22}{7}{48} & 56 & 76 & 132 & 223\\
\xst{15}{16}{19} & 66 & 67 & 133 & 255\\
\xst{10}{9}{63} & 70 & 71 & 141 & 69\\
\xst{51}{1}{46} & 65 & 78 & 143 & 111\\
\xst{1}{13}{7} & 80 & 64 & 144 & 113\\
\xst{40}{11}{31} & 80 & 67 & 147 & 77\\
\xst{2}{11}{61} & 85 & 65 & 150 & 81\\
\xst{31}{11}{30} & 75 & 75 & 150 & 363\\
\xst{25}{8}{15} & 74 & 77 & 151 & 281\\
\xst{10}{11}{61} & 79 & 76 & 155 & 155\\
\xst{47}{1}{41} & 70 & 86 & 156 & 99\\
\xst{9}{5}{60} & 70 & 86 & 156 & 227\\
\xst{16}{23}{30} & 81 & 76 & 157 & 59\\
\xst{27}{13}{46} & 78 & 80 & 158 & 275\\
\xst{7}{16}{55} & 92 & 70 & 162 & 65\\
\xst{9}{14}{41} & 87 & 80 & 167 & 167\\
\xst{41}{7}{29} & 87 & 81 & 168 & 265\\
\xst{31}{10}{27} & 82 & 87 & 169 & 233\\
\xst{3}{26}{35} & 92 & 79 & 171 & 89\\
\xst{31}{33}{37} & 98 & 88 & 186 & 79\\
\end{tabular}
\end{tabular}}
\end{sidewaystable}

\begin{sidewaystable}\tbl{\label{tab:BigCrush4096}Results of BigCrush on \xorshifts[4096] generators.}{%
\renewcommand{\arraystretch}{1.3}
\begin{tabular}{c|c|c}
$M_{32}$&$M_8$&$M_2$\\
\hline
\begin{tabular}{l|rr|r|r}
\multirow{2}{*}{Algorithm} & \multicolumn{3}{c|}{Failures}  & \multirow{2}{*}{$W$}\\
& \multicolumn{1}{c}{S} & \multicolumn{1}{c|}{R} & \multicolumn{1}{c|}{+} \\
\hline
\xst{14}{41}{15} & 33 & 27 & 60 & 241\\
\xst{5}{22}{27} & 34 & 30 & 64 & 45\\
\xst{30}{29}{39} & 33 & 32 & 65 & 177\\
\xst{25}{3}{49} & 30 & 38 & 68 & 441\\
\xst{7}{12}{59} & 43 & 25 & 68 & 103\\
\xst{19}{34}{19} & 34 & 36 & 70 & 291\\
\xst{12}{11}{61} & 32 & 39 & 71 & 195\\
\xst{5}{27}{21} & 34 & 41 & 75 & 187\\
\xst{23}{26}{29} & 36 & 42 & 78 & 49\\
\xst{11}{9}{25} & 35 & 44 & 79 & 567\\
\end{tabular}&
\begin{tabular}{l|rr|r|r}
\multirow{2}{*}{Algorithm} & \multicolumn{3}{c|}{Failures}  & \multirow{2}{*}{$W$}\\
& \multicolumn{1}{c}{S} & \multicolumn{1}{c|}{R} & \multicolumn{1}{c|}{+} \\
\hline
\xst{5}{22}{27} & 34 & 35 & 69 & 45\\
\xst{5}{27}{21} & 36 & 35 & 71 & 187\\
\xst{25}{3}{49} & 35 & 37 & 72 & 441\\
\xst{7}{12}{59} & 34 & 39 & 73 & 103\\
\xst{11}{9}{25} & 40 & 34 & 74 & 567\\
\xst{12}{11}{61} & 41 & 33 & 74 & 195\\
\xst{19}{34}{19} & 39 & 35 & 74 & 291\\
\xst{14}{41}{15} & 43 & 34 & 77 & 241\\
\xst{30}{29}{39} & 42 & 37 & 79 & 177\\
\xst{23}{26}{29} & 38 & 43 & 81 & 49\\
\end{tabular}&
\begin{tabular}{l|rr|r|r}
\multirow{2}{*}{Algorithm} & \multicolumn{3}{c|}{Failures}  & \multirow{2}{*}{$W$}\\
& \multicolumn{1}{c}{S} & \multicolumn{1}{c|}{R} & \multicolumn{1}{c|}{+} \\
\hline
\xst{11}{9}{25} & 30 & 33 & 63 & 567\\
\xst{5}{27}{21} & 37 & 27 & 64 & 187\\
\xst{25}{3}{49} & 33 & 34 & 67 & 441\\
\xst{19}{34}{19} & 39 & 36 & 75 & 291\\
\xst{23}{26}{29} & 40 & 35 & 75 & 49\\
\xst{30}{29}{39} & 38 & 37 & 75 & 177\\
\xst{12}{11}{61} & 40 & 37 & 77 & 195\\
\xst{14}{41}{15} & 36 & 42 & 78 & 241\\
\xst{7}{12}{59} & 38 & 44 & 82 & 103\\
\xst{5}{22}{27} & 38 & 50 & 88 & 45\\
\end{tabular}
\end{tabular}}
\end{sidewaystable}

\begin{sidewaystable}\tbl{\label{tab:Dieharder4096}Results of Dieharder on \xorshifts[4096] generators.}{%
\renewcommand{\arraystretch}{1.3}
\begin{tabular}{c|c|c}
$M_{32}$&$M_8$&$M_2$\\
\hline
\begin{tabular}{l|rr|r|r}
\multirow {2}{*}{Algorithm} & \multicolumn{3}{c|}{Failures} & \multirow{2}{*}{$W$} \\
& \multicolumn{1}{c}{S} & \multicolumn{1}{c|}{R} & \multicolumn{1}{c|}{+} \\
\hline
\xst{25}{3}{49} & 70 & 70 & 140 & 441\\
\xst{12}{11}{61} & 58 & 83 & 141 & 195\\
\xst{30}{29}{39} & 67 & 77 & 144 & 177\\
\xst{5}{22}{27} & 62 & 84 & 146 & 45\\
\xst{11}{9}{25} & 73 & 75 & 148 & 567\\
\xst{19}{34}{19} & 85 & 66 & 151 & 291\\
\xst{14}{41}{15} & 83 & 74 & 157 & 241\\
\xst{7}{12}{59} & 73 & 85 & 158 & 103\\
\xst{23}{26}{29} & 73 & 88 & 161 & 49\\
\xst{5}{27}{21} & 98 & 67 & 165 & 187\\
\end{tabular}&
\begin{tabular}{l|rr|r|r}
\multirow{2}{*}{Algorithm} & \multicolumn{3}{c|}{Failures}  & \multirow{2}{*}{$W$}\\
& \multicolumn{1}{c}{S} & \multicolumn{1}{c|}{R} & \multicolumn{1}{c|}{+} \\
\hline
\xst{25}{3}{49} & 67 & 70 & 137 & 441\\
\xst{14}{41}{15} & 72 & 69 & 141 & 241\\
\xst{30}{29}{39} & 70 & 75 & 145 & 177\\
\xst{11}{9}{25} & 73 & 77 & 150 & 567\\
\xst{12}{11}{61} & 75 & 80 & 155 & 195\\
\xst{19}{34}{19} & 89 & 67 & 156 & 291\\
\xst{5}{22}{27} & 93 & 65 & 158 & 45\\
\xst{23}{26}{29} & 72 & 87 & 159 & 49\\
\xst{5}{27}{21} & 75 & 84 & 159 & 187\\
\xst{7}{12}{59} & 90 & 77 & 167 & 103\\
\end{tabular}&
\begin{tabular}{l|rr|r|r}
\multirow{2}{*}{Algorithm} & \multicolumn{3}{c|}{Failures}  & \multirow{2}{*}{$W$}\\
& \multicolumn{1}{c}{S} & \multicolumn{1}{c|}{R} & \multicolumn{1}{c|}{+} \\
\hline
\xst{19}{34}{19} & 75 & 64 & 139 & 291\\
\xst{5}{22}{27} & 67 & 77 & 144 & 45\\
\xst{25}{3}{49} & 77 & 71 & 148 & 441\\
\xst{5}{27}{21} & 77 & 71 & 148 & 187\\
\xst{11}{9}{25} & 81 & 76 & 157 & 567\\
\xst{14}{41}{15} & 79 & 78 & 157 & 241\\
\xst{23}{26}{29} & 74 & 84 & 158 & 49\\
\xst{12}{11}{61} & 74 & 85 & 159 & 195\\
\xst{7}{12}{59} & 84 & 79 & 163 & 103\\
\xst{30}{29}{39} & 78 & 89 & 167 & 177\\
\end{tabular}
\end{tabular}}
\end{sidewaystable}

\section{Jumping ahead}

The simple form of a \xorshift generator makes it trivial to jump ahead quickly
by any number of next-state steps. If $\bm v$ is the current state, we want to
compute $\bm v M^j$ for some $j$. But $M^j$ is always expressible as a
polynomial in $M$ of degree lesser than that of the characteristic polynomial.
To find such a polynomial it suffices to compute $x^j \bmod P(x)$, where $P(x)$
is the characteristic polynomial of $M$. Such a computation can be easily
carried out using standard techniques (quadratures to find $x^{2^k}\bmod P(x)$,
etc.), leaving us with a polynomial $Q(x)$ such that $Q(M)=M^j$. Now, if \[
Q(x)=\sum_{i=0}^n \alpha_ix^i, \] we have \[ \bm v M^j = \bm v Q(M)
=\sum_{i=0}^n \alpha_i\bm vM^i, \] and now $\bm v M^i$ is just the $i$-th state
after the current one. If we known in advance the $\alpha_i$'s, computing $\bm
vM^j$ requires just computing the next state for $n$ times, accumulating by xor
the $i$-th state iff $\alpha_i\neq0$.\footnote{Brent's
\texttt{ranut} generator~\cite{BreURNGS} contains one of the first
applications of this technique.}

In general, one needs to compute the $\alpha_i$'s for each desired $j$, but the
practical usage of this technique is that of providing subsequences that are
guaranteed to be non-overlapping. We can fix a reasonable jump, for example
$2^{512}$ for a \xorshifts[1024] generator, and store the $\alpha_i$'s for such a
jump as a bit mask. Operating the jump is now entirely trivial, as it requires
at most $1024$ state changes. In Figure~\ref{fig:jump1024} we show the jump function
for the generator of Figure~\ref{fig:code1024}. By iterating the jump function,
one can access $2^{512}$ non-overlapping sequences of length $2^{512}$ (except for
the last one, which will be of length $2^{512}-1$).

\section{Comparison}

How do our best \xorshifts generators score with respect to more complex
generators in the literature? We decided to perform a comparison with the
popular Mersenne Twister \mt~\cite{MaNMT},\footnote{More precisely, with its
64-bit version.} with \wella/\wellb, two generators introduced by~\citeN{PLMILPGBLRM2} as an improvement over the Mersenne Twister, and
with \xorgens[4096], a very recent 4096-bit generator introduced by~\citeN{BreSLPRNGUSX}
we mentioned in Section~\ref{sec:high}. All these generators are
non-cryptographic and aim at fast, high-quality generation.
As usual, 100 tests are performed at 100 equispaced points of the state space.

We choose generators from the \xorshifts family that
perform well on both BigCrush and Dieharder, have a good weight score and enough
large parameters (which provide faster state change spreading): more
precisely, the \xorshifts[64] generator $A_1(12,25,27)\cdot M_{32}$ (Figure~\ref{fig:code64}), 
\xorshifts[1024] with parameters $31$, $11$, $30$ and multiplier $M_8$
(Figure~\ref{fig:code1024}), and \xorshifts[4096] with parameters $25$, $3$, $49$
and multiplier $M_2$.

\subsection{Quality}

Table~\ref{tab:full01} compares the BigCrush scores of the generators we discussed. 
The results are quite interesting. A simple 64-bit \xorshifts
generator has less linear artifacts than \mt, \wella or \wellb and, thus, a significantly better
score. High-dimension \xorgens[4096] and \xorshifts generators perform significantly
better, in spite of being extremely simple, and have no systematic failure.
The 64-bit \xorshifts generator suggested by ``Numerical Recipes'' fails
systematically the BirthdaySpacings test, contrarily the one we 
have selected.\footnote{Note that we report the number of \emph{failed tests}
on our 100 seeds. L'Ecuyer and Simard~\cite{LESTU01} report the number of \emph{types of failed tests} (e.g.,
failing two distinct RandomWalk tests counts as one) on a single run, so some
care must be taken when comparing the results we report and those reported by
them.}  
We do not report the results of Dieharder, as at this level of quality the suite is 
unable to make any significant distinction among the generators.



\begin{table}\tbl{\label{tab:full01}A comparison of generators using BigCrush.}{%
\renewcommand{\arraystretch}{1.3}
\begin{tabular}{l|rr|r|r|l}
\multirow{2}{*}{Algorithm} & \multicolumn{3}{c|}{Failures} &
\multirow{2}{*}{$W/n$} & \multirow{2}{*}{Systematic}\\
& \multicolumn{1}{c}{S} & \multicolumn{1}{c|}{R} & \multicolumn{1}{c|}{+}&\\
\hline
$A_1(12,25,27)\cdot M_{32}$  & 230 & 133 & 363 & $0.48$ &MatrixRank \\ 
$A_3(4,35,21)\cdot M_{32}$   & 240 & 223 & 463 & $0.38$ &MatrixRank, BirthdaySpacings \\ 
\xorshifts[1024]  & 33 & 32 & 65 & $0.35$ &---\\  
\xorshifts[4096]  & 33 & 34 & 67 & $0.11$ &--- \\ 
\xorgens[4096]		 & 42 &  40 & 82 & $0.23$ &--- \\ 
\mt          & 258 & 258 & 516 & $0.34$ &  LinearComp \\ 
\wella       & 441 & 441 & 882 & $0.40$ &MatrixRank, LinearComp \\ 
\wellb       & 235 & 233 & 468 & $0.43$  &LinearComp \\ 
\end{tabular}}
\end{table}


\subsection{Escaping zeroland}

We show in Figure~\ref{fig:ez} the speed at which a few of the
generators of Table~\ref{tab:full01} ``escape from zeroland''~\cite{PLMILPGBLRM2}:
purely linearly recurrent generators with a very large state space need a very
long time to get from an initial state with a small number of ones to a state in
which the ones are approximately half. The figure shows a measure of escape time
given by the ratio of ones in a window of 4 consecutive 64-bit values sliding over the first 100\,000 generated values, averaged
over all possible seeds with exactly one bit set (see~\cite{PLMILPGBLRM2} for a
detailed description).

As it is known, \mt needs hundreds of thousands of iterations to start
behaving correctly. \xorshifts[4096] and \xorgens[4096] need a few thousand (but \xorgens[4096] oscillates always around $1/2$), \wellb and \xorshifts[1024] a few hundreds,
whereas \wella just a few dozens, and \xorshifts[64] is almost unaffected. 

Table~\ref{tab:ez} condenses
Figure~\ref{fig:ez} into the mean and standard deviation of the displayed values.
Clearly, the multiplication step helps in 
reducing the correlation between the number of ones in the state and the
number of ones in the output values. Also, the slowness in recovering from states
with too many zeroes it directly correlated to the size of the state space---a very good
argument against linear generators with too large state spaces. 

\begin{figure}
\centering
\includegraphics[scale=.8]{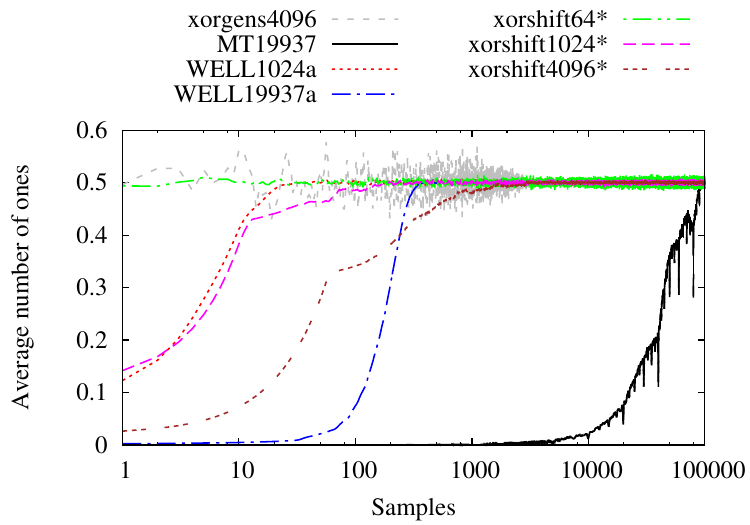}
\caption{\label{fig:ez}Convergence to ``half of the bits are ones in average'' plot.}
\end{figure}

\begin{table}\tbl{\label{tab:ez}Mean and standard deviation for the data shown in Figure~\protect\ref{fig:ez}.}{%
\renewcommand{\arraystretch}{1.3}
\begin{tabular}{l|rr}
Algorithm & Mean & Standard deviation\\
\hline
\xorshifts[64]   & $0.5000$ & $0.0039$ \\
\xorgens[4096]   & $0.5000$ & $0.0031$ \\
\xorshifts[1024] & $0.5000$ & $0.0035$ \\
\wella           & $0.4999$ & $0.0036$ \\
\xorshifts[4096] & $0.4992$ & $0.0110$ \\
\wellb           & $0.4983$ & $0.0185$ \\
\mt              & $0.2823$ & $0.1705$ \\
\end{tabular}}
\end{table}

\subsection{Speed}

Finally, we benchmark the generators of Table~\ref{tab:full01}. Our tests were run
on an Intel\textregistered{} Core\texttrademark{} i7-4770 CPU @3.40GHz
(Haswell), and the results are shown in Table~\ref{tab:speed} (variance is
undetectable, as we generate $10^{10}$ values in each test). We also report as a strong baseline results about \texttt{SFMT19937}, the
\emph{SIMD-Oriented Fast Mersenne Twister}~\cite{SaMSOFMT}, a 128-bit version of the Mersenne Twister
based on the SSE2 extended instruction set of Intel processors (and thus not usable, in principle, on other processors).
We used suitable options to keep the compiler from unrolling loops or extracting
loop invariants.

The highest speed is achieved by the high-dimensional \xorshifts generators. 
\texttt{SFMT19937} is a major improvement in speed over \mt, albeit
slightly slower than a high-dimensional \xorshifts generator; it fails systematically, moreover,
the same tests of \texttt{MT19937}.

A \xorshifts[64] generator is actually \emph{slower} than its
high-dimensional counterparts. This is not surprising, as the three shift/xors in a
\xorshifts[64] generator form a dependency chain and must be executed in
sequence, whereas two of the shifts of a higher-dimension generator are
independent and can be internally parallelized by the CPU.
\wella and \wellb are heavily penalized by their 32-bit structure.

\begin{table}\tbl{\label{tab:speed}Time to emit a 64-bit integer on an
Intel\textregistered{} Core\texttrademark{} i7-4770 CPU @3.40GHz (Haswell).}{%
\renewcommand{\arraystretch}{1.3}
\begin{tabular}{l|rrr}
Algorithm & Speed (ns/64 bits)\\
\hline
\xorshifts[64] & $1.58$\\
\xorshifts[1024] & $1.36$\\
\xorshifts[4096] & $1.36$\\
\xorgens[4096]	 & $2.06$\\
\mt (64-bit version)       & $2.84$\\
\texttt{SFMT19937} & $1.80$ \\
\wella     & $10.31$\\
\wellb     & $7.45$\\
\end{tabular}}
\end{table}

\begin{figure}[ht]
\centering
\begin{verbatim}
#include <stdint.h>

uint64_t x;

uint64_t next(void) {
    x ^= x >> 12; // a
    x ^= x << 25; // b
    x ^= x >> 27; // c
    return x * UINT64_C(2685821657736338717);
}
\end{verbatim}
\caption{\label{fig:code64}The suggested \xorshifts[64] generator in C99 code.
The variable \texttt{x} should be initialized to a nonzero seed before calling
\texttt{next()}.}
\end{figure}

\begin{figure}[ht]
\centering
\begin{verbatim}
#include <stdint.h>

uint64_t s[16];
int p;

uint64_t next(void) { 
    const uint64_t s0 = s[p];
    uint64_t s1 = s[p = (p + 1) & 15];
    s1 ^= s1 << 31; // a
    s[p] = s1 ^ s0 ^ (s1 >> 11) ^ (s0 >> 30); // b,c
    return s[p] * UINT64_C(1181783497276652981);
}
\end{verbatim}
\caption{\label{fig:code1024}The suggested \xorshifts[1024] generator in C99 code.
The array \texttt{s} should be initialized to a nonzero seed before calling
\texttt{next()}.}
\end{figure}

\begin{figure}[ht]
\centering
\begin{verbatim}
#include <stdint.h>
#include <string.h>

void jump(void) {
    static const uint64_t JUMP[] = {
        0x84242f96eca9c41d, 0xa3c65b8776f96855, 0x5b34a39f070b5837,
        0x4489affce4f31a1e, 0x2ffeeb0a48316f40, 0xdc2d9891fe68c022,
        0x3659132bb12fea70, 0xaac17d8efa43cab8, 0xc4cb815590989b13,
        0x5ee975283d71c93b, 0x691548c86c1bd540, 0x7910c41d10a1e6a5,
        0x0b5fc64563b3e2a8, 0x047f7684e9fc949d, 0xb99181f2d8f685ca,
        0x284600e3f30e38c3 
    };

    uint64_t t[16] = { 0 };
    for(int i = 0; i < sizeof JUMP / sizeof *JUMP; i++)
        for(int b = 0; b < 64; b++) {
            if (JUMP[i] & 1ULL << b)
                for(int j = 0; j < 16; j++)
                    t[j] ^= s[(j + p) & 15];
            next();
        }

    for(int j = 0; j < 16; j++)
        s[(j + p) & 15] = t[j];
}
\end{verbatim}
\caption{\label{fig:jump1024}The jump function for the \xorshifts[1024] generator of Figure~\ref{fig:code1024}
in C99 code. It is equivalent to $2^{512}$ calls to \texttt{next()}.}
\end{figure}

%
%

\section{Conclusions}

After our careful experimental analysis, we reach the following conclusions:

\smallskip\noindent\textbf{A \xorshifts[1024] generator is an excellent choice for a
general-purpose, high-speed generator.} The statistical quality of the generator
is very high (it has, actually, the best results in BigCrush), and its period is
so large that the probability of overlapping sequences is practically zero, even
in the largest parallel simulation (and strictly non-overlapping sequences can be easily generated
using the jump function).
Nonetheless, the state space is reasonably small, so that seeding it with
high-quality bits is not too expensive, and recovery from states with a large
number of zeroes happens quickly. The generator is also blazingly fast (it
is actually the fastest generator we tested).
The reasonable state space makes it also easier, in case a large number
of generators is used at the same time, to fit their state into the
cache.
In any case, with respect to other generators, the state is accessed in a more
localized way, as read and write operations happen \emph{at two consecutive locations}, and thus will generate 
at most one cache miss.

\smallskip\noindent\textbf{In case memory is an issue, or array access is
expensive, a very good general-purpose generator is a \xorshifts[64] generator.}
While the generator $A_1(12,25,27)\cdot M_{32}$ fails systematically the
MatrixRank test, it has less linear artifacts than \mt, \wella or \wellb, which
fail systematically even more tests. It is a very good choice if memory footprint is an
issue and a very large number of generators is necessary. It can also be used, for instance, to
generate the initial state of another generator with a larger state space using a $64$-bit seed.
We remark that a \xorshifts[64] generator can also actually be \emph{faster} than a
\xorshifts[1024] generator if the underlying language incurs significant
costs when accessing an array: for instance, in Java a \xorshifts[64] generator
emits a value in $1.62$\,ns, whereas a \xorshifts[1024] generator needs
$2.06$\,ns.

\smallskip\noindent\textbf{Linear generators with an excessively long period have a
number of problems that are not compensated by higher statistical quality.}
\wellb is almost four slower than
\xorshifts[1024], and has a worse performance in BigCrush; moreover, recovery from states
with many zeroes, albeit enormously improved with respect to \mt, is still very slow, and seeding properly
the generator requires almost twenty thousands random bits. In the end,
it is in general difficult to motivate state spaces larger than $2^{1024}$.
Similar considerations are made by \citeN{PTVNR} and \citeN{LEPFRNG}.

\smallskip\noindent\textbf{Surprisingly simple and fast generators can produce
sequences that pass strong statistical tests.} The code in
Figure~\ref{fig:code1024} is extremely shorter and simpler than that of \mt,
\wella or \wellb. Yet, it performs significantly better on BigCrush. It is a tribute
to Marsaglia's cleverness that just eight logical operations, one addition and one multiplication by a constant can 
produce sequences of such high quality. \xorgens generators are similar with this respect,
but use several more operations due to the additional shift and to combination with a \emph{Weyl generator} to 
hide linear artifacts~\cite{BreSLPRNGUSX}.

\smallskip\noindent\textbf{The $t$ for which the multiplier has a good figure of merit has no detectable
effect on the quality of the generator.} If our tests, we could not find any significant difference
between the behavior of generators based on $M_{32}$, $M_8$ or $M_2$. It could be interesting 
to experiment with multipliers having very \emph{bad} figures of merit, or more
generally with multipliers chosen using different heuristics.

\smallskip\noindent\textbf{Equidistribution is more useful as a design feature
than as an evaluation feature.} While \emph{designing} generators around
equidistribution might be a good idea, as it leads in general to good
generators, \emph{evaluation} by equidistribution is a more delicate matter
because of high-bits bias, instability issues, and failure to detect the generators having the
best scores in statistical suites.

\smallskip\noindent\textbf{TestU01 has significantly more resolution than
Dieharder as a test suite.} In particular in the high-dimension case, TestU01 is
able to provide useful information, whereas Dieharder scores flatten down.
However, TestU01 (as any other test suite with high-bits bias) must always be
applied to the reverse generator, too.

\bibliography{biblio}

\end{document}